\documentclass[letterpaper,journal]{IEEEtran}

\usepackage{amsmath,amsfonts,amssymb}
\usepackage{algorithmic}
\usepackage{algorithm}
\usepackage{graphicx}
\usepackage{booktabs}
\usepackage{cite}
\usepackage{orcidlink}
\usepackage{balance}
\hypersetup{hidelinks}

\newtheorem{theorem}{Theorem}

\begin{document}

\title{Distributed Power Control for Equal-Priority NGSO Mega-Constellation Coexistence}

\author{Hao~Yuan\,\orcidlink{0000-0002-5670-6171},~Shuang~Zheng\,\orcidlink{0000-0002-3090-4196},~Heying~Cao\,\orcidlink{0009-0006-6008-6143},~and~Xing~Zhang\,\orcidlink{0000-0003-4345-6166},~\IEEEmembership{Senior~Member,~IEEE}%
\thanks{Hao Yuan, Heying Cao, and Xing Zhang are with the School of Information and Communication Engineering, Beijing University of Posts and Telecommunications, Beijing 100876, China (e-mail: yuan\_hao@bupt.edu.cn; hszhang@bupt.edu.cn).}%
\thanks{Shuang Zheng is with the School of Electronic and Information Engineering, Tiangong University, Tianjin 300387, China (e-mail: zshuang@tiangong.edu.cn).}%
}

\markboth{IEEE Communications Letters}%
{H.~Yuan \MakeLowercase{\textit{et al.}}: Distributed Power Control for Equal-Priority NGSO Mega-Constellation Coexistence}

\maketitle


\begin{abstract}
Two non-geostationary orbit (NGSO) mega-constellations over shared spectrum leads to inter-constellation interference. Existing mitigation mainly relies on frequency/time partitioning, spatial isolation and satellite selection. However, these approaches may sacrifice resource reuse or be constrained by spatial geometry. To enable interference mitigation without centralized coordination, this letter investigates distributed power control for independent, equal-priority NGSO operators. We formulate the interaction as a non-cooperative game and propose a distributed water-filling best response using local serving-channel estimates and aggregate interference measurements. Pure-strategy Nash equilibrium existence and a weighted spectral-norm sufficient condition for uniqueness and convergence are established. Simulation results demonstrate utility gains of 3.91\% over uncoordinated transmission with 20 active beams.
\end{abstract}

\begin{IEEEkeywords}
Mega-constellation, non-geostationary orbit, inter-constellation interference, game theory, Nash equilibrium, power control.
\end{IEEEkeywords}

\section{Introduction}
\IEEEPARstart{T}{he} rapid deployment of non-geostationary orbit (NGSO) mega-constellations, such as SpaceX's Starlink and Amazon's Kuiper, has intensified co-frequency interference among independent satellite operators. The ITU Radio Regulations impose equivalent power flux-density limits~\cite{ITU2024}, whereas FCC rules specify NGSO fixed-satellite-service sharing procedures~\cite{FCC2024}. Recent work surveys inter-constellation interference and avoidance~\cite{CoFreq2023} and synthesizes FCC/ITU regulations with coexistence mechanisms spanning frequency, time, power, and space~\cite{Roberts2026}. For direct coexistence, radio-resource management for multiple large constellations was studied in~\cite{Re2021}, while probabilistic look-aside using ephemeris information with limited real-time information exchange was developed in~\cite{Lin2026}. These approaches differ from an operator-level continuous power game with equilibrium guarantees.

Game theory provides a natural framework for strategic interactions in satellite networks~\cite{Jiang2024}. Related studies address satellite--terrestrial coexistence~\cite{Chen2021}, GEO--LEO interference mitigation~\cite{Jalali2023,Gu2022}, potential-game resource allocation within one satellite network~\cite{Zhang2021}, distributed channel allocation for a GEO--LEO hybrid constellation~\cite{Li2024Game}, and joint beam scheduling and power control in a mega hybrid constellation~\cite{Li2025IoT}. In contrast, we model two same-tier, independent NGSO operators as strategic players. Each selects a continuous per-beam power vector under its own budget and utility, without an incumbent--secondary hierarchy or a centralized joint controller. The distinction is therefore the operator-level, equal-priority NGSO--NGSO game, rather than game-theoretic satellite power control per se.

This gap motivates a distributed framework that connects the coexistence model, equilibrium theory, and measurable implementation requirements. Our contributions are:
\begin{itemize}
\setlength{\itemsep}{0pt}
\setlength{\parskip}{0pt}
\setlength{\parsep}{0pt}
\item \emph{Scenario and modeling:} Cross-constellation power control between two independent, equal-priority NGSO operators sharing a downlink band is formulated as a non-cooperative game with per-beam power vectors and operator-local constraints.
\item \emph{Theory:} A potential-like characterization with an explicit additive weak-coupling error is established. Pure-strategy NE existence is proved, together with a weighted spectral-norm sufficient condition for uniqueness and convergence under cumulative cross-constellation coupling.
\item \emph{Algorithm and evidence:} A distributed water-filling best response requiring only local serving-channel estimates and aggregate interference measurements is derived. Numerical results evaluate convergence, scalability, geometric sensitivity, and the efficiency--lower-tail trade-off against uncoordinated, heuristic, max-min, and multi-start centralized benchmarks.
\end{itemize}

\section{System Model}

Consider two NGSO operators $\mathcal{A}$ and $\mathcal{B}$ serving a shared Ku-band area (Fig.~\ref{fig:system_scenario}). Operator $\mathcal{A}$ uses altitude $a_{\mathcal{A}} = 550$~km and $\mathcal{B}$ uses $a_{\mathcal{B}} = 630$~km. On one 250~MHz time--frequency resource-block snapshot, operator $i$ activates $K_i$ spot beams and schedules one terminal per beam. All active beams of both operators reuse the block. To isolate inter-operator coexistence, intra-operator inter-beam interference is assumed ideally suppressed by precoding/coordination; hence all $K_{\bar i}$ beams of the other operator $\bar i$ contribute to~\eqref{eq:sinr}.

\begin{figure}[!t]
\centering
\includegraphics[width=0.86\columnwidth]{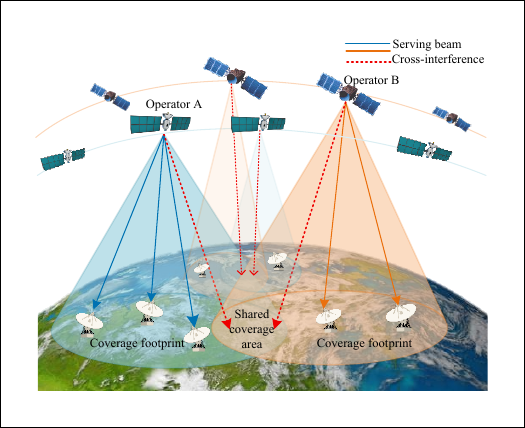}
\caption{Two independent NGSO operators reuse a downlink resource block. Blue/orange solid links denote service, and red dashed links denote cross-operator interference; orbital tracks and beam counts are schematic.}
\label{fig:system_scenario}
\end{figure}

\subsection{Signal Model}

For operator $i$'s beam $k$, the received SINR is given by~\eqref{eq:sinr},
\begin{equation}\label{eq:sinr}
\gamma_k^{i} = \frac{p_k^{i} g_k^{i}}{\sigma^2 + I_k^{\bar i}}, \quad I_k^{\bar i} = \sum_{j=1}^{K_{\bar i}} p_j^{\bar i} h_{kj}^{i},
\end{equation}
where $p_k^{i} \geq 0$ is the transmit power; $g_k^{i}$ is the serving channel gain including free-space path loss and antenna gains; $h_{kj}^{i}$ is the cross-interference gain from beam $j$ of operator $\bar i$ to user $k$ of operator $i$; $\sigma^2 = k_{\mathrm{B}} T B$ is the thermal noise power with noise temperature $T = 300$~K and bandwidth $B = 250$~MHz; and $I_k^{\bar i}$ is the aggregate cross-interference.

\subsection{Interference Channel Model}

The serving gain $g_k^{i}$ follows the free-space path loss model with satellite and terminal antenna gains. For a 0.6-m parabolic user terminal, the receive gain toward an interfering satellite at off-axis angle $\phi$ (degrees) follows the ITU-R S.1428-1 reference pattern~\cite{ITU1428}. With $q=D/\lambda$ and $20\le q\le25$, it is
\begin{equation}\label{eq:sidelobe}
G_r(\phi) = \begin{cases}
G_{\max}-2.5\!\times\!10^{-3}(q\phi)^2,&0\le\phi<\phi_m,\\
G_1,&\phi_m\le\phi<95/q,\\
29-25\log_{10}\phi,&95/q\le\phi\le33.1^\circ,\\
-9,&33.1^\circ<\phi\le80^\circ,\\
-5,&80^\circ<\phi\le180^\circ,
\end{cases}
\end{equation}
where $G_{\max}=20\log_{10}q+7.7$, $G_1=29-25\log_{10}(95/q)$, and $\phi_m=(20/q)\sqrt{G_{\max}-G_1}$. At the interfering satellite, $\psi_{kj}^{i}$ is the angle between beam $j$'s boresight toward its own scheduled user and the direction toward victim user $k$. With no measured pattern available, the LEO branch of ITU-R S.1528-1~\cite{ITU1528} gives
\begin{equation}\label{eq:sat_pattern}
G_s(\psi)=\begin{cases}
G_{s,\max}-3(\psi/\psi_b)^2,&0\le\psi\le Y,\\
G_{s,\max}+L_s-25\log_{10}\!(\psi/Y),&Y<\psi\le Z,\\
L_F,&Z<\psi\le180^\circ,
\end{cases}
\end{equation}
where $L_s=-6.75$~dB, $Y=1.5\psi_b$, and $Z=Y10^{0.04(G_{s,\max}+L_s-L_F)}$. We use $G_{s,\max}=30$~dBi, infer $D_s/\lambda=12.49$ with 65\% aperture efficiency, set $\psi_b=\sqrt{1200}/(D_s/\lambda)=2.77^\circ$, and use the conservative Annex-1 value $L_F=5$~dBi. The dual-ended cross gain is
\begin{equation}\label{eq:cross_gain}
h_{kj}^{i}=10^{[G_s(\psi_{kj}^{i})+G_r(\phi_{kj}^{i})-L_{\rm FS}(d_{kj}^{i})]/10}.
\end{equation}

\subsection{Utility and Strategy}

Each operator independently maximizes the weighted sum spectral-efficiency (weighted sum-rate) utility
\begin{equation}\label{eq:utility}
u_i(\mathbf{p}_i, \mathbf{p}_{\bar i}) = \sum_{k=1}^{K_i} w_k^{i} \log_2\!\bigl(1 + \gamma_k^{i}\bigr),
\end{equation}
over the strategy set $\mathcal{S}_i = \{\mathbf{p}_i : \sum_k p_k^{i} \leq P_{\mathrm{tot}}^{i},\ p_{\min} \leq p_k^{i} \leq p_{\max},\ \forall k\}$, where $P_{\mathrm{tot}}^{i}$ is the total power budget. We assume $K_i p_{\min}\le P_{\mathrm{tot}}^{i}$; if $P_{\mathrm{tot}}^{i}>K_i p_{\max}$, the budget is inactive and the box upper bounds limit the usable power. The objective maximizes aggregate spectral efficiency; it does not impose proportional fairness, minimum-rate, or outage guarantees.

\section{Nash Equilibrium Analysis and Distributed Solution}\label{sec:analysis}

Define the game $\mathcal{G} = (\mathcal{N}, \{\mathcal{S}_i\}, \{u_i\})$ with $\mathcal{N} = \{\mathcal{A}, \mathcal{B}\}$.

\subsection{Potential Structure}

We first characterize how cross-interference perturbs the separable structure each operator would enjoy in isolation.

\begin{theorem}\label{thm:potential}
Define the potential-like function, normalized by an arbitrary reference power $P_0$,
\begin{equation}\label{eq:potential}
\Phi(\mathbf{p}) = \sum_{i \in \mathcal{N}} \sum_{k=1}^{K_i} w_k^{i} \log_2\!\frac{p_k^{i} g_k^{i} + \sigma^2 + I_k^{\bar i}}{P_0},
\end{equation}
and the weak-interference ratio
\begin{equation}\label{eq:weak}
\eta \triangleq \frac{K \cdot \max_{k,j,i} h_{kj}^{i}}{\min_{k,i} g_k^{i}}, \quad K = \max(K_{\mathcal{A}}, K_{\mathcal{B}}).
\end{equation}
Then (i) for any fixed unilateral deviation, $\Delta\Phi-\Delta u_i\!\to\!0$ as $h_{kj}^{i}\!\to\!0$ (with equality at zero cross-interference); and (ii) in general,
\begin{equation}\label{eq:approx}
|\Delta\Phi-\Delta u_i|\le
\frac{K_iK_{\bar i}\bar w\,\Delta p_{\max}\bar h}{(\ln2)p_{\min}g_{\min}}
\triangleq\epsilon_{\Phi},
\end{equation}
where $\bar h=\max h_{kj}^{i}$. For fixed $K_{\mathcal A}$ and $K_{\mathcal B}$, $\epsilon_{\Phi}=O(\eta)$ as $\bar h/g_{\min}\to0$. This additive bound does not by itself imply ordinal sign preservation when the utility change is of the same order as the error.
\end{theorem}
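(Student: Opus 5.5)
The plan is to compute the discrepancy $\Delta\Phi-\Delta u_i$ in closed form for a unilateral deviation and then bound it term by term. The main idea is that $u_i$ and $\Phi$ share the ``own'' logarithms of operator $i$, so the discrepancy consists only of the terms of the other operator.

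\emph{Step 1 (decomposition).} Write each utility as
\[
u_i=\sum_{k}w_k^i\bigl[\log_2(p_k^ig_k^i+\sigma^2+I_k^{\bar i})-\log_2(\sigma^2+I_k^{\bar i})\bigr].
\]
Fix a deviation $\mathbf p_i\to\mathbf p_i'$ with $\mathbf p_{\bar i}$ held fixed.

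\begin{itemize}
\item The interference $I_k^{\bar i}$ seen by operator $i$'s users depends only on $\mathbf p_{\bar i}$, so it is unchanged. Hence the second logarithm in $u_i$ cancels in $\Delta u_i$.
\item What remains of $\Delta u_i$ equals the change of the $i$-block of $\Phi$. The normalization by $P_0$ cancels in every difference, so the choice of $P_0$ is immaterial.
\item The only part of $\Delta\Phi$ not matched is the change of the $\bar i$-block.
\end{itemize}

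This gives the exact identity
\[
\Delta\Phi-\Delta u_i=\sum_{j=1}^{K_{\bar i}}w_j^{\bar i}\log_2\frac{p_j^{\bar i}g_j^{\bar i}+\sigma^2+I_j^{i\prime}}{p_j^{\bar i}g_j^{\bar i}+\sigma^2+I_j^{i}},
\qquad I_j^{i\prime}-I_j^{i}=\sum_{k=1}^{K_i}(p_k^{i\prime}-p_k^{i})h_{jk}^{\bar i}.
\]
Here $I_j^{i}$ is the aggregate interference at operator $\bar i$'s user $j$ before the deviation, and $I_j^{i\prime}$ the same quantity after it.

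\emph{Step 2 (bounding).} Apply the mean value theorem to $\ln$: for $a,b>0$, $|\ln a-\ln b|\le |a-b|/\min(a,b)$. Both the numerator and the denominator above are at least $p_{\min}g_{\min}$, which is where the assumption $p_{\min}>0$ is essential. The numerator-minus-denominator difference is bounded by $|I_j^{i\prime}-I_j^{i}|\le K_i\,\Delta p_{\max}\,\bar h$, where $\Delta p_{\max}$ is the largest per-beam change. Summing over the $K_{\bar i}$ terms with $w_j^{\bar i}\le\bar w$ and converting $\ln$ to $\log_2$ gives exactly $\epsilon_\Phi$ in~\eqref{eq:approx}. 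This proves (ii).

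\emph{Step 3 (consequences).} For (i), the exact identity shows the discrepancy vanishes identically when all $h=0$. For general $h$, the bound $\epsilon_\Phi$ is linear in $\bar h$, so the discrepancy tends to $0$ for any fixed deviation.

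For the $O(\eta)$ claim, write
\[
\epsilon_\Phi=\frac{K_{\bar i}\bar w\,\Delta p_{\max}}{(\ln 2)p_{\min}}\cdot\frac{K_i}{K}\cdot\frac{K\bar h}{g_{\min}}.
\]
Since $\Delta p_{\max}\le p_{\max}-p_{\min}$ and $K_{\mathcal A},K_{\mathcal B}$ are fixed, the prefactor is a constant, so $\epsilon_\Phi\le C\eta$.

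The final caveat is a remark rather than a claim to prove. Suppose $|\Delta u_i|\le\epsilon_\Phi$. Then the sign of $\Delta\Phi$ need not match that of $\Delta u_i$, since an additive error of the same magnitude can flip it. So only an $\epsilon_\Phi$-approximate potential property is asserted.

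The main obstacle is keeping the bound uniform. I will use the mean-value form of the logarithm bound, with a lower bound on both arguments, rather than $\ln(1+x)\le x$. The reason is that the latter fails to control decreases of interference symmetrically. I also need to confirm that the denominator bound $p_{\min}g_{\min}$ uses the serving gain of operator $\bar i$'s users, which is covered by the global minimum $g_{\min}$.
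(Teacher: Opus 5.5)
Your proposal is correct and follows the paper's proof essentially step for step. Your exact identity is the paper's cross-term $\Delta R$: the paper reaches it through $\Phi=\sum_i u_i+C(\mathbf p)$, while you match the $i$-block of $\Phi$ directly; the bound then uses the same mean-value estimate with both arguments at least $p_{\min}g_{\min}$, together with $|\Delta I|\le K_i\Delta p_{\max}\bar h$ (reading $\Delta p_{\max}$ as the realized per-beam change, bounded by the paper's $p_{\max}-p_{\min}$, is only a cosmetic tightening).
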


\begin{IEEEproof}
Write $N_k^{i}=p_k^{i}g_k^{i}+\sigma^{2}+I_k^{\bar i}$ and $D_k^{i}=\sigma^{2}+I_k^{\bar i}$. The normalization by $P_0$ cancels in every difference, so $u_i=\sum_k w_k^{i}(\log_2 N_k^{i}-\log_2 D_k^{i})$ and
\begin{equation}\label{eq:phi_expand}
\Phi=\sum_{i\in\mathcal N}u_i+\underbrace{\sum_{i,k}w_k^{i}\log_2(D_k^{i}/P_0)}_{\triangleq\, C(\mathbf p)}.
\end{equation}
Consider operator $\mathcal{A}$ changing $\mathbf{p}_{\mathcal{A}}\to\mathbf{p}_{\mathcal{A}}'$ with $\mathbf{p}_{\mathcal{B}}$ fixed. Since $D_k^{\mathcal{A}}=\sigma^{2}+I_k^{\mathcal{B}}$ depends on $\mathbf{p}_{\mathcal{B}}$ only, the $-\log_2 D_k^{\mathcal{A}}$ terms in $u_{\mathcal{A}}$ are invariant. For operator $\mathcal{B}$, both $N_k^{\mathcal{B}}$ and $D_k^{\mathcal{B}}=\sigma^{2}+I_k^{\mathcal{A}}$ vary, but the $-\log_2 D_k^{\mathcal{B}}$ part of $\Delta u_{\mathcal{B}}$ cancels exactly with the matching term in $\Delta C$. Hence
\begin{equation}\label{eq:dphi}
\Delta\Phi = \Delta u_{\mathcal{A}} + \Delta R,
\end{equation}
with the single cross-term
\begin{equation}\label{eq:cross}
\begin{split}
\Delta R &= \sum_{k=1}^{K_{\mathcal{B}}} w_k^{\mathcal{B}}\log_2\!\frac{p_k^{\mathcal{B}}g_k^{\mathcal{B}}+D_k^{\mathcal{B}\prime}}{p_k^{\mathcal{B}}g_k^{\mathcal{B}}+D_k^{\mathcal{B}}} \\
&= \sum_{k} w_k^{\mathcal{B}}\log_2\!\Bigl(1+\tfrac{\Delta I_k^{\mathcal{A}}}{p_k^{\mathcal{B}}g_k^{\mathcal{B}}+D_k^{\mathcal{B}}}\Bigr),
\end{split}
\end{equation}
where $D_k^{\mathcal{B}\prime}=\sigma^{2}+I_k^{\mathcal{A}}(\mathbf{p}_{\mathcal{A}}')$ and $\Delta I_k^{\mathcal{A}}=I_k^{\mathcal{A}}(\mathbf{p}_{\mathcal{A}}')-I_k^{\mathcal{A}}(\mathbf{p}_{\mathcal{A}})$. (i) As $h_{kj}^{i}\!\to\!0$, $\Delta I_k^{\mathcal{A}}\!\to\!0$; thus $\Delta\Phi-\Delta u_{\mathcal{A}}=\Delta R\!\to\!0$. At zero cross-coupling, equality holds and $\Phi$ is an exact (thus ordinal) potential~\cite{Monderer1996}. (ii) For positive or negative $\Delta I_k^{\mathcal{A}}$, the mean-value theorem gives $|\log_2 x-\log_2 y|\le |x-y|/(\ln2\,\min\{x,y\})$. Here both arguments are at least $p_{\min}g_{\min}$. Also, $|\Delta I_k^{\mathcal{A}}|\le K_{\mathcal{A}}\Delta p_{\max}\bar h$ with $\Delta p_{\max}=p_{\max}-p_{\min}$ and $\bar h=\max_{k,j}h_{kj}$, hence
\begin{equation}\label{eq:dbound}
|\Delta R|\le \frac{K_{\mathcal{A}}K_{\mathcal{B}}\,\bar w\,\Delta p_{\max}\,\bar h}{(\ln 2)\,p_{\min}g_{\min}},
\end{equation}
with $\bar w=\max_{k,i}w_k^{i}$. For fixed beam counts this bound is $O(\eta)$ because $\eta=K\bar h/g_{\min}$. Therefore $|\Delta\Phi-\Delta u_{\mathcal{A}}|=|\Delta R|\le\epsilon_{\Phi}$.
\end{IEEEproof}

\textit{Remark~1 (scope):} The weak-coupling result in Theorem~\ref{thm:potential} is asymptotic and does not characterize the experimental regime; Section~\ref{sec:results} instead assesses convergence using the sufficient contraction diagnostic, direct iterations, and best-response residuals.

\subsection{Existence}

\begin{theorem}\label{thm:existence}
$\mathcal{G}$ admits at least one pure-strategy NE.
\end{theorem}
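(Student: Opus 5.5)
The natural route is the Debreu--Glicksberg--Fan theorem for concave games: a pure-strategy NE exists whenever every strategy set is nonempty, convex and compact, every utility is jointly continuous in $\mathbf{p}$, and each $u_i$ is quasi-concave in the player's own strategy $\mathbf{p}_i$. We do not route existence through the potential-like function $\Phi$ of Theorem~\ref{thm:potential}, because that result only gives an approximate potential with additive error $\epsilon_\Phi$. An approximate potential yields at best an $\epsilon$-equilibrium, not an exact NE.

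First, we check the strategy sets. $\mathcal{S}_i$ is the intersection of the box $[p_{\min},p_{\max}]^{K_i}$ with the half-space $\sum_k p_k^i\le P_{\mathrm{tot}}^i$. It is therefore a closed, bounded polytope, hence convex and compact. It is nonempty because the standing assumption $K_ip_{\min}\le P_{\mathrm{tot}}^i$ guarantees that $\mathbf{p}_i=p_{\min}\mathbf{1}$ is feasible.

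Second, we check continuity. Since $\sigma^2>0$ and every $p_j^{\bar i}$ and $h_{kj}^i$ is nonnegative, each SINR $\gamma_k^i$ is a ratio of affine functions with denominator bounded below by $\sigma^2$. Each $\gamma_k^i$ is thus continuous on $\mathcal{S}_{\mathcal A}\times\mathcal{S}_{\mathcal B}$, and so is $u_i$ as a finite weighted sum of $\log_2(1+\gamma_k^i)$.

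Third, we check own-strategy concavity. This is where the absence of intra-operator interference, assumed in the system model, is essential. For fixed $\mathbf{p}_{\bar i}$, the interference $I_k^{\bar i}$ does not depend on $\mathbf{p}_i$. Hence $u_i(\cdot,\mathbf{p}_{\bar i})=\sum_k w_k^i\log_2\bigl(1+p_k^ig_k^i/(\sigma^2+I_k^{\bar i})\bigr)$ is separable across beams, and each summand is a concave function of the single scalar $p_k^i$. With nonnegative weights $w_k^i$, the sum is concave, and therefore quasi-concave, on $\mathcal{S}_i$.

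With these three facts the theorem applies directly and the NE exists. An equivalent route, which also ties to the water-filling algorithm, is Kakutani's theorem applied to the best-response correspondence. Berge's maximum theorem gives nonempty, compact values and upper hemicontinuity, and concavity gives convex values; if $g_k^i>0$ and $w_k^i>0$, strict concavity even makes the best response single-valued and continuous, so Brouwer suffices. The only step requiring care is the concavity verification, since it would fail if intra-operator interference were present. Here it is immediate from the model assumption, so I expect no real obstacle.
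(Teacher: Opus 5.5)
Your proposal is correct and follows essentially the same route as the paper. The paper checks the same three hypotheses (nonempty, compact, convex $\mathcal{S}_i$; continuity; concavity of $u_i$ in $\mathbf{p}_i$ via the separable per-beam terms $\log_2(1+p_k^i g_k^i/D_k^i)$) and then invokes Rosen's concave-game existence theorem, for which your Debreu--Glicksberg--Fan (or Kakutani/Brouwer) citation is an interchangeable substitute, and your explicit justifications of nonemptiness from $K_i p_{\min}\le P_{\mathrm{tot}}^i$ and of continuity from $\sigma^2>0$ fill in details the paper only asserts.
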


\begin{IEEEproof}
Each $\mathcal{S}_i$ is compact, convex, and non-empty, and $u_i$ is continuous and concave (hence quasi-concave) in $\mathbf{p}_i$ for fixed $\mathbf{p}_{\bar i}$, since every summand $\log_2(1+p_k^{i}g_k^{i}/D_k^{i})$ is concave and increasing in $p_k^{i}$. Existence of a pure-strategy NE follows from Rosen's concave-game existence result~\cite{Rosen1965}.
\end{IEEEproof}

\subsection{Distributed Water-Filling Best Response}

For fixed $\mathbf p_{\bar i}$, operator $i$ solves a convex best-response problem. Since~\eqref{eq:utility} is increasing in every $p_k^i$, the active total power is $P_{\mathrm{use}}^i=\min(P_{\mathrm{tot}}^i,K_i p_{\max})$; for $K_i p_{\max}<P_{\mathrm{tot}}^i$, all beams saturate at $p_{\max}$. Otherwise, the KKT condition
\[
\frac{w_k^i g_k^i}{\ln 2\,(\sigma^2+I_k^{\bar i}+p_k^i g_k^i)}=\nu_i
\]
gives, with $\lambda_i=\nu_i\ln2$,
\begin{equation}\label{eq:wf}
p_k^{i*} = \left[\frac{w_k^{i}}{\lambda_i} - \frac{\sigma^2 + I_k^{\bar i}}{g_k^{i}}\right]_{p_{\min}}^{p_{\max}},
\end{equation}
where $\lambda_i$ is found by bisection with complexity $\mathcal{O}(K\log(1/\delta))$ per operator per iteration.

The upper bisection bound in Algorithm~\ref{alg:br} makes the clipped solution equal to $p_{\min}$ for all beams. The algorithm requires only local serving-channel estimates and aggregate interference measurements; the full cross-channel matrix is used only in the offline simulation. The following theorem gives a sufficient convergence condition.

\begin{theorem}\label{thm:convergence}
Let $F_{i\bar i}=\mathrm{diag}(\mathbf g_i)^{-1}H_{i\bar i}$, where $H_{i\bar i}[k,j]=h_{kj}^{i}$, and let $W_i=\mathrm{diag}(\mathbf w_i)\succ0$. Define
\begin{equation}\label{eq:jsum}
L_{i\bar i}=\|W_i^{-1/2}F_{i\bar i}W_{\bar i}^{1/2}\|_2,\quad
J_2=\begin{bmatrix}0&L_{\mathcal A\mathcal B}\\L_{\mathcal B\mathcal A}&0\end{bmatrix}.
\end{equation}
If $\rho(J_2)=\sqrt{L_{\mathcal A\mathcal B}L_{\mathcal B\mathcal A}}<1$, the game has a unique NE, and both simultaneous and alternating water-filling best-response iterations converge to it geometrically.
\end{theorem}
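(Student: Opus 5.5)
We show that the water-filling best response is a contraction in a weighted norm, following the standard argument for iterative water-filling (Scutari--Palomar--Barbarossa style). Write operator $i$'s best response~\eqref{eq:wf} as
\[
\mathrm{BR}_i(\mathbf p_{\bar i})=\Pi_{\mathcal S_i}^{W_i^{-1}}\!\Bigl(\lambda_i^{-1}\mathbf w_i-\mathrm{diag}(\mathbf g_i)^{-1}(\sigma^2\mathbf 1+H_{i\bar i}\mathbf p_{\bar i})\Bigr).
\]
Here $\Pi^{W_i^{-1}}_{\mathcal S_i}$ is the projection onto $\mathcal S_i$ in the norm $\|\mathbf x\|_{W_i^{-1}}^2=\mathbf x^\top W_i^{-1}\mathbf x$. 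The first step is to verify this projection form. The KKT condition of the best response is $w_k^i/(\sigma^2/g_k^i+I_k^{\bar i}/g_k^i+p_k^i)=\lambda_i$ plus box multipliers. This is exactly the optimality condition of minimizing $\sum_k (p_k^i-x_k)^2/w_k^i$ over $\mathcal S_i$ with $x_k=-(\sigma^2+I_k^{\bar i})/g_k^i$, up to the common water level. More precisely, it is the projection of $-\mathrm{diag}(\mathbf g_i)^{-1}(\sigma^2\mathbf 1+H_{i\bar i}\mathbf p_{\bar i})$ onto $\mathcal S_i$ in the $W_i^{-1}$-weighted metric. The clipping to $[p_{\min},p_{\max}]$ and the budget (active or inactive) are then absorbed into the convex set $\mathcal S_i$.

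The second step uses nonexpansiveness of projections onto a closed convex set in their own metric. It gives
\[
\|\mathrm{BR}_i(\mathbf p_{\bar i})-\mathrm{BR}_i(\mathbf q_{\bar i})\|_{W_i^{-1}}\le\|F_{i\bar i}(\mathbf p_{\bar i}-\mathbf q_{\bar i})\|_{W_i^{-1}}
=\|W_i^{-1/2}F_{i\bar i}W_{\bar i}^{1/2}\,W_{\bar i}^{-1/2}(\mathbf p_{\bar i}-\mathbf q_{\bar i})\|_2.
\]
This is at most $L_{i\bar i}\|\mathbf p_{\bar i}-\mathbf q_{\bar i}\|_{W_{\bar i}^{-1}}$, with constant $\sigma^2$ terms cancelling. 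Collecting $e_i=\|\cdot\|_{W_i^{-1}}$ errors, the joint simultaneous map satisfies $\mathbf e^{(t+1)}\le J_2\mathbf e^{(t)}$ componentwise. Since $J_2$ is nonnegative with $\rho(J_2)=\sqrt{L_{\mathcal{AB}}L_{\mathcal{BA}}}<1$, there is a positive vector $\mathbf v$ with $J_2\mathbf v\le\rho'\mathbf v$ for some $\rho'<1$. For example, take $\mathbf v=(\sqrt{L_{\mathcal{AB}}},\sqrt{L_{\mathcal{BA}}})$, perturbed if an entry is zero. Then the weighted max-norm $\max_i e_i/v_i$ contracts by $\rho'$. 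Banach's fixed-point theorem on the complete set $\mathcal S_{\mathcal A}\times\mathcal S_{\mathcal B}$ then gives a unique fixed point, which is the unique NE since fixed points of $\mathrm{BR}$ are exactly NEs. Existence is already provided by Theorem~\ref{thm:existence}. Geometric convergence of the Jacobi (simultaneous) iteration follows directly.

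For the alternating (Gauss--Seidel) scheme, compose the two half-steps. One full sweep maps $\mathbf p_{\mathcal B}$ to $\mathrm{BR}_{\mathcal B}(\mathrm{BR}_{\mathcal A}(\mathbf p_{\mathcal B}))$, which has Lipschitz constant $L_{\mathcal{BA}}L_{\mathcal{AB}}=\rho(J_2)^2<1$ in the $W_{\mathcal B}^{-1}$ norm. So the $\mathcal B$-iterates converge geometrically. The $\mathcal A$-iterates follow with an extra factor $L_{\mathcal{AB}}$.

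The main obstacle I expect is the first step: making the weighted-projection identification rigorous when the budget constraint and box bounds interact. In particular, the water level $\lambda_i$ depends on $\mathbf p_{\bar i}$, so the map is not simply a translation followed by clipping. I would handle this by proving the variational-inequality characterization directly. The best response is the unique $\mathbf p_i\in\mathcal S_i$ with $(\mathbf x-\mathbf p_i)^\top W_i^{-1}(\mathbf p_i+\mathrm{diag}(\mathbf g_i)^{-1}(\sigma^2\mathbf 1+H_{i\bar i}\mathbf p_{\bar i}))\ge0$ for all $\mathbf x\in\mathcal S_i$. Only the $W_i^{-1}$ scaling of the gradient enters, so that scaling has to be checked against the KKT condition from the log utility. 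If the exact log gradient does not match (the gradient is $w_k/(N_k)$, not affine), I would fall back on the VI/monotonicity argument: show that strong monotonicity of the game's pseudo-gradient in the weighted metric is implied by $\rho(J_2)<1$, in the manner of Scutari et al.
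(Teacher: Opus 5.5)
Your overall route (weighted projection, non-expansiveness, the $J_2$ block bound with Perron vector $(\sqrt{L_{\mathcal A\mathcal B}},\sqrt{L_{\mathcal B\mathcal A}})$, Banach, and the $\rho(J_2)^2$ sweep bound for alternating updates) is the paper's. However, the step you flag as the obstacle is stated incorrectly, and as written it fails.

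Write $\mathbf n_i=\mathrm{diag}(\mathbf g_i)^{-1}(\sigma^2\mathbf 1+H_{i\bar i}\mathbf p_{\bar i})$. Projecting $-\mathbf n_i$ onto $\mathcal S_i$, where the budget is an inequality, does not give the water-filling solution. Every entry of $-\mathbf n_i$ is negative, so its $W_i^{-1}$-projection onto $\mathcal S_i$ is simply $p_{\min}\mathbf 1$. Likewise, $p_{\min}\mathbf 1$ is the unique solution of the variational inequality you wrote over $\mathcal S_i$, because each term $(x_k-p_{\min})(p_{\min}+n_k^i)/w_k^i$ is nonnegative. Your first display is true only because $\lambda_i$ is the correct water level, which depends on $\mathbf p_{\bar i}$. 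As you note yourself, that dependence destroys the non-expansiveness argument.

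The missing idea is to replace $\mathcal S_i$ by $\mathcal X_i=\{\mathbf p_i:\ p_{\min}\le p_k^i\le p_{\max},\ \mathbf 1^\top\mathbf p_i=P_{\mathrm{use}}^i\}$. This does not change the best response, because $u_i$ is strictly increasing in every $p_k^i$, so every best response exhausts $P_{\mathrm{use}}^i$.

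On $\mathcal X_i$, the KKT conditions of $\min\tfrac12\|\mathbf p_i+\mathbf n_i\|_{W_i^{-1}}^2$ read $p_k^i=[t\,w_k^i-n_k^i]_{p_{\min}}^{p_{\max}}$, with a single scalar $t$ fixed by $\mathbf 1^\top\mathbf p_i=P_{\mathrm{use}}^i$. The water-filling vector \eqref{eq:wf} satisfies these with $t=1/\lambda_i$. Whenever $P_{\mathrm{use}}^i>K_ip_{\min}$ the solving $t$ is positive, since for $t\le0$ every entry clips to $p_{\min}$. KKT is sufficient for this strictly convex problem, so the two solutions coincide.

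A cleaner way to see why the water level disappears uses $W_i^{-1}\mathbf w_i=\mathbf 1$. For $\mathbf p\in\mathcal X_i$, $\|\mathbf p-\mathbf y-c\,\mathbf w_i\|_{W_i^{-1}}^2$ differs from $\|\mathbf p-\mathbf y\|_{W_i^{-1}}^2$ by $-2c\,\mathbf 1^\top(\mathbf p-\mathbf y)+c^2\|\mathbf w_i\|_{W_i^{-1}}^2$. This is constant on $\mathcal X_i$, so $\Pi_{\mathcal X_i}^{W_i^{-1}}(\mathbf y+c\,\mathbf w_i)=\Pi_{\mathcal X_i}^{W_i^{-1}}(\mathbf y)$ for every scalar $c$. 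Hence your first display, with $\mathcal S_i$ replaced by $\mathcal X_i$, equals $\Pi_{\mathcal X_i}^{W_i^{-1}}(-\mathbf n_i)$, which is independent of $\lambda_i$.

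So the non-affine log gradient is not an obstruction, and no strong-monotonicity fallback is needed. With this correction, the rest of your argument goes through exactly as in the paper.
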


\begin{IEEEproof}
Let $\mathcal X_i=\{\mathbf p_i:p_{\min}\le p_k^i\le p_{\max},\ \mathbf1^T\mathbf p_i=P_{\mathrm{use}}^i\}$ and $\mathbf n_i=(\sigma^2\mathbf1+H_{i\bar i}\mathbf p_{\bar i})\oslash\mathbf g_i$. The KKT solution in~\eqref{eq:wf} is equivalently
\[
\begin{aligned}
B_i(\mathbf n_i)&=\arg\min_{\mathbf p_i\in\mathcal X_i}
\tfrac12\|\mathbf p_i+\mathbf n_i\|_{W_i^{-1}}^2,\\
\|\mathbf x\|_{W_i^{-1}}&\triangleq\|W_i^{-1/2}\mathbf x\|_2.
\end{aligned}
\]
Hence $B_i$ is a metric projection and is non-expansive. Therefore $\|\Delta B_i\|_{W_i^{-1}}\le\|\Delta\mathbf n_i\|_{W_i^{-1}}\le L_{i\bar i}\|\Delta\mathbf p_{\bar i}\|_{W_{\bar i}^{-1}}$. The two block errors are dominated by $J_2$. If $\rho(J_2)<1$, a weighted block-maximum norm makes the simultaneous map contractive, so Banach's theorem gives uniqueness and geometric convergence. For alternating updates, each full sweep contracts by at most $L_{\mathcal A\mathcal B}L_{\mathcal B\mathcal A}=\rho(J_2)^2<1$.
\end{IEEEproof}

\section{Numerical Results}\label{sec:results}

\begin{figure}[!t]
\centering
\includegraphics[width=\columnwidth]{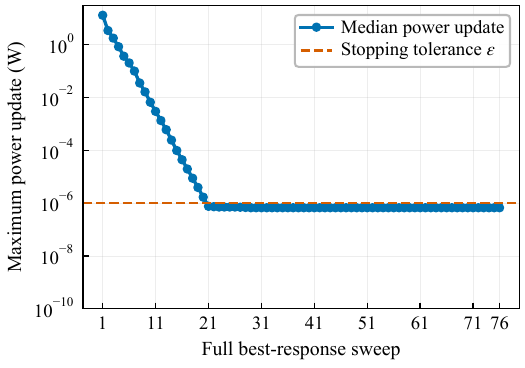}
\caption{Near-inline convergence of Algorithm~\ref{alg:br} for $K=20$ over 50 realizations. The solid curve is the median of $\max_i\|\mathbf p_i^{(t)}-\mathbf p_i^{(t-1)}\|_2$, and the dashed line marks $\epsilon=10^{-6}$~W.}
\label{fig:convergence}
\end{figure}

\begin{figure}[!t]
\centering
\includegraphics[width=\columnwidth]{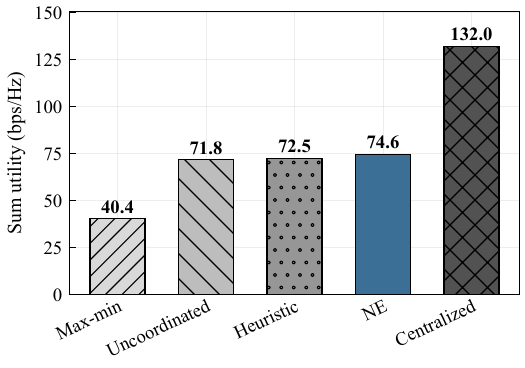}
\caption{Near-inline aggregate utility for $K=20$ over 50 realizations. The NE improves on uncoordinated transmission by 3.91\%; the heuristic is a simple-rule reference, and the centralized result is a multi-start benchmark.}
\label{fig:utility}
\end{figure}

\begin{figure}[!t]
\centering
\includegraphics[width=\columnwidth]{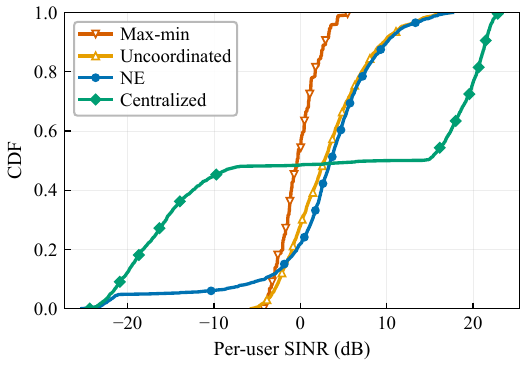}
\caption{Near-inline empirical CDF of per-user SINR for $K=20$ over 50 realizations.}
\label{fig:sinr_cdf}
\end{figure}

We evaluate the framework with Table~\ref{tab:params}. Free-space path loss at 11.7~GHz over 550~km is 168.6~dB and noise power is $-119.8$~dBW, giving 29.3~dB interference-free SINR for 43~dBW EIRP and 35.1~dBi terminal gain.

\begin{table}[!t]
\caption{Simulation Parameters}\label{tab:params}
\centering
\begin{tabular}{lc}
\toprule
\textbf{Parameter} & \textbf{Value} \\
\midrule
Constellation $\mathcal{A}$ altitude & 550~km \\
Constellation $\mathcal{B}$ altitude & 630~km \\
Carrier frequency & 11.7~GHz \\
Bandwidth per beam & 250~MHz \\
Total power budget & 150~W per operator \\
Max/min power per beam & 20~W / 0.1~W \\
Satellite peak gain & 30~dBi \\
Satellite pattern / half 3-dB beamwidth & ITU-R S.1528-1 / $2.77^{\circ}$ \\
Terminal diameter / peak gain & 0.6~m / 35.1~dBi \\
Noise temperature & 300~K \\
Users per operator & 20 \\
Service region / co-cell jitter radius & 100~km / 10~km \\
Nominal elevation / azimuth offset & $45$--$80^{\circ}$ / $\pm5^{\circ}$ \\
Nominal receiver-angle floor & $3^{\circ}$ \\
Near-inline elevation / separation & $55$--$65^{\circ}$ / $2$--$3^{\circ}$ \\
\bottomrule
\end{tabular}
\end{table}

\begin{algorithm}[!b]
\caption{Distributed Water-Filling Best Response}\label{alg:br}
\begin{algorithmic}[1]
\STATE \textbf{Input:} $\{g_k^{i}\}$, $\{w_k^{i}\}$, $\sigma^2$, $p_{\min}$, $p_{\max}$, $P_{\mathrm{tot}}^{i}$, $\epsilon$, $\delta$
\STATE $P_{\mathrm{use}}^{i}\leftarrow \min(P_{\mathrm{tot}}^{i},K_i p_{\max})$
\STATE $p_k^{i} \leftarrow P_{\mathrm{use}}^{i}/K_i$ for all $i,k$ \COMMENT{Feasible equal-power init.}
\STATE $t \leftarrow 0$
\REPEAT
\FOR{each operator $i \in \{\mathcal{A}, \mathcal{B}\}$ (alternating)}
\STATE Measure $I_k^{\bar i}$ for $k=1,\ldots,K_i$ \COMMENT{Aggregate cross-interf.}
\IF{$P_{\mathrm{use}}^{i}\ge K_i p_{\max}-\delta$}
\STATE $p_k^{i}\leftarrow p_{\max}$ for all $k$; \textbf{continue}
\ENDIF
\STATE $\lambda_{\mathrm{lo}} \leftarrow 0$;\; $\lambda_{\mathrm{hi}} \leftarrow \max_k \frac{w_k^{i} g_k^{i}}{\sigma^2 + I_k^{\bar i}+p_{\min}g_k^{i}}$
\REPEAT
\STATE $\lambda \leftarrow (\lambda_{\mathrm{lo}}+\lambda_{\mathrm{hi}})/2$
\FOR{$k = 1,\ldots,K_i$}
\STATE $\tilde{p}_k \leftarrow w_k^{i}/\lambda - (\sigma^2 + I_k^{\bar i})/g_k^{i}$
\STATE $p_k^{i} \leftarrow [\tilde{p}_k]_{p_{\min}}^{p_{\max}}$ \COMMENT{Project to box}
\ENDFOR
\IF{$\sum_k p_k^{i} > P_{\mathrm{use}}^{i}$}
\STATE $\lambda_{\mathrm{lo}} \leftarrow \lambda$
\ELSE
\STATE $\lambda_{\mathrm{hi}} \leftarrow \lambda$
\ENDIF
\UNTIL{$|\sum_k p_k^{i} - P_{\mathrm{use}}^{i}| < \delta$}
\ENDFOR
\STATE $t \leftarrow t + 1$
\UNTIL{$\max_i \|\mathbf{p}^{i,(t)} - \mathbf{p}^{i,(t-1)}\|_2 < \epsilon$}
\STATE \textbf{Output:} $\{p_k^{i,*}\}$
\end{algorithmic}
\end{algorithm}

\textbf{Baseline Definitions.} Uncoordinated transmission water-fills against noise alone, ignoring cross-operator interference. The interference-aware heuristic sets $p_k^i\propto[1+(\mathbf H_{i\bar i}\mathbf p_{\bar i})_k/(g_k^i p_{\mathrm{nom}}^i+\sigma^2)]^{-1}$, where $p_{\mathrm{nom}}^i=\min(p_{\max},P_{\mathrm{tot}}^i/K_i)$, then clips and renormalizes over two alternating rounds. Max-min alternates best responses that maximize the minimum beam SINR. NE is the converged Algorithm~\ref{alg:br} profile; the multi-start centralized benchmark jointly maximizes $u_{\mathcal A}+u_{\mathcal B}$ without claiming global optimality. All schemes share channel realizations and power constraints and are scored in Fig.~\ref{fig:utility} by the weighted sum spectral efficiency $u_{\mathcal A}+u_{\mathcal B}$ (unit weights).

Fifty realizations (seed 42) use spherical-Earth geometry. Each operator's multibeam satellite serves co-located cells in a 100-km-radius region at $40^\circ$N; the two scheduled users per cell are independently jittered within 10~km. Beam $j$ points to user $j$, so the three-dimensional geometry determines both $\phi_{kj}^{i}$ and $\psi_{kj}^{i}$. The \emph{nominal} scenario samples Table~\ref{tab:params}'s elevation and azimuth ranges; its $3^\circ$ receive-angle floor represents alternative-satellite selection in near-inline geometry~\cite{ITU1431}. The \emph{near-inline} scenario instead conditions center-look separation uniformly on $[2^\circ,3^\circ]$ and uses the raw receive angle, representing the window before avoidance. Centralized SLSQP uses seven starts, maxiter 2000, and ftol $10^{-12}$.

\textbf{Nominal Reference.} The nominal $h_{kj}^i/g_k^i$ median, 95th percentile, and maximum are $10^{-4}$, 0.0179, and 0.0774. All 50 runs satisfy Theorem~\ref{thm:convergence} and converge; mean (maximum) $\rho(J_2)$ is 0.076 (0.337). The NE gains 0.06\% over uncoordinated transmission and reaches 95.7\% of the multi-start benchmark. Satellite selection thus leaves the nominal case weakly coupled with little adaptation gain.

\textbf{Near-Inline Convergence.} The near-inline $h/g$ median, 95th percentile, and maximum are 0.0094, 0.1850, and 0.9920. Fig.~\ref{fig:convergence} reports the stopping metric. Here $r_{\mathrm{BR},\infty}\triangleq\max_i\|\mathbf p_i-B_i(\mathbf p_{\bar i})\|_\infty$ is the terminal best-response residual, with $B_i(\mathbf p_{\bar i})$ denoting the best response to the other profile. All 50 runs converge in 9--76 sweeps (mean 24.48), with $r_{\mathrm{BR},\infty}<4.5\!\times\!10^{-7}$~W. The sufficient condition holds for 19/50 runs, with mean $\rho(J_2)=1.201$; convergence outside it is empirical. Since mean $\eta=13.195$, the additive potential approximation is not asserted tight here.

\textbf{Near-Inline Performance.} In Fig.~\ref{fig:utility}, the NE gains 3.91\% over uncoordinated transmission and reaches 56.5\% of the centralized benchmark. Fig.~\ref{fig:sinr_cdf}'s 5th-percentile SINRs are $-3.29$~dB (uncoordinated), $-17.08$~dB (NE), $-3.60$~dB (max-min), and $-21.99$~dB (centralized). Sum-utility optimization sacrifices the pooled lower tail; max-min preserves a tail similar to uncoordinated but forfeits 45.9\% of the NE utility.

\begin{figure}[!t]
\centering
\includegraphics[width=\columnwidth]{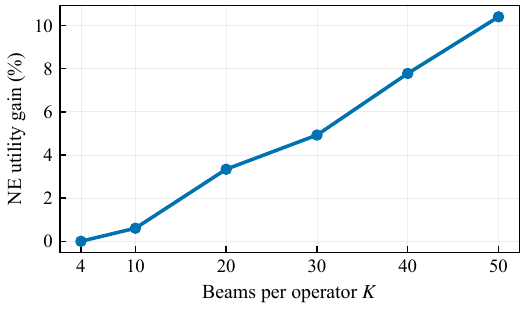}
\caption{Near-inline NE utility gain over uncoordinated transmission versus beams per operator; each point averages 20 realizations.}
\label{fig:scalability}
\end{figure}

\begin{figure}[!t]
\centering
\includegraphics[width=\columnwidth]{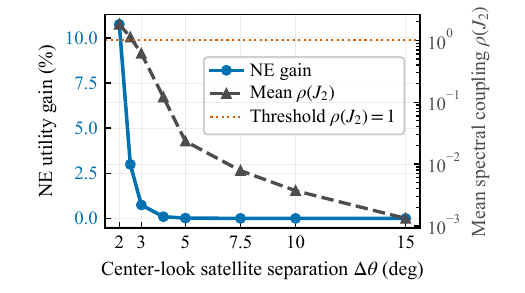}
\caption{Physical separation sweep for $K=20$; NE utility gain is relative to uncoordinated transmission.}
\label{fig:separation}
\end{figure}

\textbf{Scalability.} In Fig.~\ref{fig:scalability}, the gain rises from 0.61\% at $K{=}10$ to 3.34\% at 20 and 10.41\% at 50; all runs have $r_{\mathrm{BR},\infty}<4.6\!\times\!10^{-7}$~W. At $K{=}4$, active upper bounds make both strategies identical.

\textbf{Geometry Sensitivity.} Fig.~\ref{fig:separation} uses physical center-look separations. Over 20 realizations per point, NE gain falls from 10.75\% at $2^\circ$ to 0.75\% at $3^\circ$, 0.02\% at $5^\circ$, and effectively zero at $15^\circ$; mean $\rho(J_2)$ falls from 1.816 to 0.613, 0.023, and 0.001. Every run converges; the sufficient condition holds for 1/20 runs at $2^\circ$, 5/20 at $2.5^\circ$, and all runs from $3^\circ$ onward.

\textbf{Discussion.} The potential result is fixed-$K$ and weak-coupling; the spectral condition is sufficient. Both operators reuse one 250~MHz block with ideal intra-operator suppression. The patterns follow ITU-R S.1428-1 and S.1528-1, while the service region and co-cell association are reference assumptions. The conditional near-inline experiment quantifies a $2^\circ$--$3^\circ$ event, not its orbital probability. Thus, the results are reproducible link-budget evidence rather than a regulatory, vendor-specific, or time-resolved assessment.

\section{Conclusion}

We modeled equal-priority NGSO operators as a non-cooperative power-control game, proved pure-strategy NE existence, and derived distributed water-filling with a weighted spectral-norm convergence condition. Dual-ended off-axis discrimination yields a 0.06\% nominal NE gain, versus 3.91\% in the $2^\circ$--$3^\circ$ near-inline window and 10.41\% at $K{=}50$. The geometry sweep localizes useful adaptation, while centralized and lower-tail gaps expose the efficiency--fairness cost of independent sum-utility maximization. Minimum-rate-aware, pricing, and ephemeris-driven extensions remain future work.

\end{document}